\newtheorem{lemma}{Lemma}
\newtheorem{proposition}{Proposition}
\definecolor{lightgray}{gray}{0.9}
\newcommand{\tvec}{\text{vec}}
\newcommand{\bA}{\mathbf{A}}
\newcommand{\bB}{\mathbf{B}}
\newcommand{\bD}{\mathbf{D}}
\newcommand{\bE}{\mathbf{E}}
\newcommand{\be}{\mathbf{e}}
\newcommand{\bh}{\mathbf{h}}
\newcommand{\bI}{\mathbf{I}}
\newcommand{\bK}{\mathbf{K}}
\newcommand{\bP}{\mathbf{P}}
\newcommand{\bQ}{\mathbf{Q}}
\newcommand{\bu}{\mathbf{u}}
\newcommand{\bU}{\mathbf{U}}
\newcommand{\bx}{\mathbf{x}}
\newcommand{\bX}{\mathbf{X}}
\newcommand{\by}{\mathbf{y}}
\newcommand{\bY}{\mathbf{Y}}
\newcommand{\bV}{\mathbf{V}}
\newcommand{\bZ}{\mathbf{Z}}
\newcommand{\mA}{\mathcal{A}}
\newcommand{\mO}{\mathcal{O}}
\newcommand{\mX}{\mathcal{X}}
\newcommand{\mZ}{\mathcal{Z}}
\newcommand{\vect}[1]{\boldsymbol #1}
\newcommand{\vOmega}{\vect{\Omega}}
\newcommand{\vSigma}{\vect{\Sigma}}
\newcommand{\vTheta}{\vect{\Theta}}
\newcommand{\vtheta}{\vect{\theta}}
\newcommand{\e}{\text{e}}
\newcommand{\gvn}{\,|\,}
\renewcommand{\epsilon}{\varepsilon}
\renewcommand{\hat}{\widehat}
\newcommand{\distn}[1]{\mathcal{#1}}
\begin{document}

\title{\mbox{Large Bayesian Tensor VARs with Stochastic Volatility}}

\author{Joshua C. C. Chan \\ Purdue University \and Yaling Qi \\ Purdue University }

\date{September 2024}


\onehalfspace

\maketitle

\thispagestyle{empty}


\begin{abstract}

\noindent We consider Bayesian tensor vector autoregressions (TVARs) in which the VAR coefficients are arranged as a three-dimensional array or tensor, and this coefficient tensor is parameterized using a low-rank CP decomposition. We develop a family of TVARs using a general stochastic volatility specification, which includes a wide variety of commonly-used multivariate stochastic volatility and  COVID-19 outlier-augmented models. In a forecasting exercise involving 40 US quarterly variables, we show that these TVARs outperform the standard Bayesian VAR with the Minnesota prior. The results also suggest that the parsimonious common stochastic volatility model tends to forecast better than the more flexible Cholesky stochastic volatility model.

\bigskip

\noindent Keywords: forecasting, outlier, stochastic volatility, tensor, vector autoregression 

\bigskip

\noindent JEL classification codes: C11, C53, C55

\end{abstract}

\newpage

\section{Introduction}

Large Bayesian vector autoregressions (BVARs) are now commonly used in macroeconomic forecasting and structural analysis, following early influential papers by \citet*{BGR10}, \citet{CKM09} and \citet{koop13}. The dominant approach to tackle the challenge of parameter proliferation in large systems is to use  shrinkage priors to regularize the variations in VAR coefficients; there is now an extensive literature on various shrinkage priors designed for BVARs.\footnote{The most widely-used shrinkage priors for BVARs are the family of Minnesota priors developed in a series of papers by \citet*{DLS84}, \citet{litterman86} and \citet{KK93, KK97}. Recent additions to this family include \citet*{GLP15} and \citet{chan22}. Another popular family is the adaptive hierarchical shrinkage priors that can be represented as scale mixtures of normals. Examples include the Bayesian Lasso \citep{PC08,KP19}, the normal-gamma prior \citep{GB10,HF19}, the horseshoe prior \citep{CPS10horseshoe,FY19} and the Dirichlet-Laplace prior \citep{DLP15,KH20}. Naturally, one can also combine these adaptive hierarchical priors with the Minnesota priors, as proposed in \citet{chan21}. \citet{HHK24} provide an excellent recent review on the state-of-the-art shrinkage priors developed for macroeconomic forecasting using BVARs.} In contrast, other dimension reduction techniques are relatively unexplored. 

We investigate the usefulness of specifying a low-rank structure on the VAR coefficients for forecasting. More specifically, we follow the approach proposed by \citet{WZLL22} to treat the VAR coefficients collectively as a three-dimensional array or tensor: for a BVAR with $n$ endogenous variables and $p$ lags, we arrange the $n\times n$ coefficient matrices $\bA_1, \ldots, \bA_p$ in the third dimension to construct the third-order tensor $\mathcal{A}\in\mathbb{R}^{n\times n\times p}$. We then model $\mathcal{A}$ using a rank-$R$ CP decomposition. We call these BVARs constructed via the CP decomposition tensor VARs or TVARs.

Using this tensor decomposition, the number of free parameters is reduced from $n^2p$ to $(2n + p)R$. Since the number of free parameters under this tensor decomposition grows linearly in $n$, it is especially suitable for applications with a large number of variables.  This approach is related to the reduced-rank VAR \citep[see, e.g.,][]{CKM11}, which may be viewed as a special case in which the rank of $\mathcal{A}$ is reduced along one of the three possible dimensions.

Departing from the homoskedastic framework in \citet{WZLL22}, we formulate the TVARs using a general stochastic volatility specification, which can represent a wide variety of multivariate stochastic volatility and COVID-19 outlier-augmented models commonly used for BVARs. This more general setup is motivated by the increasing recognition of the importance of allowing time-varying volatility for forecasting macroeconomic and financial variables, especially after the extreme economic turbulence triggered by the COVID-19 pandemic \citep[see, e.g.,][]{LP22,CCMM22}.

We develop efficient estimation procedures for these TVARs with stochastic volatility. In particular, we propose two types of algorithms to sample the components of the CP decomposition of the VAR coefficient tensor $\mathcal{A}$: we can either sample each block of the components jointly as a matrix or sample each column of the component matrix separately. The latter is inspired by the equation-by-equation estimation  approach designed for BVARs proposed in \citet{CCM19} and \citet{CCCM22}, which can drastically reduce the computational burden when the number of endogenous variables $n$ is very large.

We illustrate the methodology using a forecasting exercise that involves 40 US quarterly variables, such as GDP, industrial production, labor market variables and a variety of inflation and interest rates. We consider various TVARs with different stochastic volatility specifications, and compare them to a standard BVAR with the Minnesota prior in an out-of-sample forecasting exercise. The results show that TVARs clearly outperform the standard BVAR, highlighting the usefulness of the low-rank specification for the VAR coefficient tensor. We also find that models with some form of time-varying volatility substantially forecast better than their homoskedastic counterparts. Interesting, the parsimonious common stochastic volatility model of \citet{CCM16} tends to outperform the more flexible stochastic volatility model of \citet{CS05}, suggesting that it is foremost important to capture the strong comovements in the macroeconomic volatilities.

Our paper is closely related to the recent work by \citet{LG23}, who also consider a BVAR with a tensor decomposition. While they focus on the Cholesky stochastic volatility model of \citet{CS05}, we consider a more general setup that can represent a wide range of multivariate stochastic volatility models. Our paper is also related to the emerging literature on modeling multidimensional tensors, where the coefficient tensors are typically parameterized using CP or Tucker decompositions; see, for example, \citet{BCIK23} and \citet{WZL24}.


The rest of this paper is organized as follows. Section \ref{s:VAR} first introduces a general framework of BVARs with a generic time-varying error covariance matrix and discusses how it can be used to represent a variety of popular stochastic volatility and COVID-19 outlier-augmented models. We then outline the proposed approach of arranging the VAR coefficient matrices as a third-order tensor in Section~\ref{s:tensor}. Next, we introduce the efficient sampling algorithms in Section~\ref{s:estimation}. Section~\ref{s:apps} considers a recursive out-of-sample forecasting application that involves 40 US macroeconomic variables. Lastly, Section~\ref{s:conclusions} concludes and briefly discusses some future research directions.

\section{Large Bayesian VARs with Stochastic Volatility} \label{s:VAR}

Let $\by_t$ be an $n\times 1$ vector of endogenous variables at time $t$ for $t=1,\ldots, T.$ Consider the following VAR($p$):
\[
	\by_t = \bA_1 \by_{t-1} + \cdots + \bA_p\by_{t-p} + \bu_t,
\]
 where $\bA_1, \ldots, \bA_p$ are $n\times n$ coefficient matrices. We omit an intercept term for ease of exposition; an intercept or any exogenous variables can be added to the model with minor modifications. Let $\bA = (\bA_1, \ldots, \bA_p)'$ be the  $np\times n$ matrix of VAR coefficients and let $\bx_t = (\by_{t-1}',\ldots, \by_{t-p}')'$ denote a vector of lag variables of dimension $np\times 1$. Then, stacking the observations over $t=1,\ldots, T$, we can rewrite the VAR more succinctly as
\begin{equation}
	\bY = \bX\bA + \bU, \label{eq:var}
\end{equation}
where the matrices $\bY$, $\bX$ and $\bU$ are, respectively, $T\times n$, $T\times np$ and $T\times n$. In a standard homoskedastic VAR, the reduced-form errors $\bu_1,\ldots, \bu_T$ are assumed to be independent and identically distributed (iid) as $\distn{N}(\mathbf{0}_n,\vSigma)$, where $\mathbf{0}_n$ is an $n\times 1$ vector of zeros and $\vSigma$ is an $n\times n$ covariance matrix. However, it is increasingly recognized that some form of time-varying volatility is needed in modeling typical macroeconomic time-series. Early influential papers such as \citet{CS05}, \citet{Primiceri05} and \citet{SZ06} have highlighted the secular variations in volatility. There is now a large empirical literature that demonstrates the importance of allowing time-varying volatility in improving model-fit and forecasting performance in the context of Bayesian VARs; examples include \citet{clark11}, \citet*{KK13}, \citet*{DGG13}, \citet{CR15}, \citet{CP16} and \citet{CE18}.

%

Below we outline a few stochastic volatility models suitable for large BVARs. In particular, the innovations of the VAR in \eqref{eq:var} are now distributed as
\begin{equation} \label{eq:Sigmat}
	\bu_t \sim \distn{N}(\mathbf{0}_n,\vSigma_t),
\end{equation}
where $\vSigma_t$ is a generic time-varying covariance matrix. 

One of the first stochastic volatility models designed for large BVARs is the common stochastic volatility proposed in \citet*{CCM16}. Their model is motivated by the empirical observation that the estimated time-varying error variances of many macroeconomic variables have broadly similar low-frequency movements. A parsimonious way to model these comovements is to introduce a time-varying latent factor to scale the error covariance matrix via
\begin{equation}\label{eq:csv}
	\vSigma_t = \e^{h_t}\vOmega, 
\end{equation}
where $\vOmega$ is a time-invariant covariance matrix. The log-volatility $h_t$ is modeled using a zero-mean stationary AR(1) process:
\begin{equation}\label{eq:h}
	h_t = \phi h_{t-1} + u_t^h, \quad u_t^h\sim\distn{N}(0,\sigma_h^2),
\end{equation}
for $t=2,\ldots, T$, where $|\phi|<1$ and the process is initialized as $h_{1}\sim\distn{N}(0,\sigma_h^2/(1-\phi^2))$. The common stochastic volatility model specified in \eqref{eq:csv}--\eqref{eq:h} can be extended to incorporate other useful features. For example, \citet{chan20} introduces a general framework that can accommodate heavy-tailed, heteroskedastic and serially dependent innovations. 

Another widely-used stochastic volatility specification for BVARs is the Cholesky stochastic volatility model---based on the modified Cholesky decomposition of the covariance matrix---developed in \citet{CS05}. More specifically, consider the decomposition of $\vSigma_t $ via
\begin{equation} \label{eq:VAR-SV}
	\vSigma_t = \bB_0^{-1} \bD_t (\bB_0^{-1})',
\end{equation}
where $\bB_{0}$ is an $n \times n$ lower triangular matrix with ones on the diagonal and $\bD_t = \text{diag}(\e^{h_{1,t}}, \ldots, \e^{h_{n,t}})$. Each element of the vector $\bh_t = (h_{1,t}, \ldots, h_{n,t})'$ is modeled via an independent autoregressive process:
\begin{equation}\label{eq:hit}
	h_{i,t} = \mu_i + \phi_i(h_{i,t-1}-\mu_i) + u_{i,t}^h, \quad u_{i,t}^h \sim \distn{N}(0, \sigma_{i}^2)
\end{equation}
for $t=2,\ldots, T$, where $h_{i,1}$ is initialized as $h_{i,1}\sim\distn{N}(\mu_i,\sigma_i^2/(1-\phi_i^2))$ for $i=1,\ldots, n.$

This Cholesky stochastic volatility specification is more flexible than the common stochastic volatility model in \eqref{eq:csv}, since the former contains $n$ stochastic volatility processes and can accommodate more complex covolatility patterns. This flexibility, however, comes at a cost of higher model complexity. Whether this stochastic volatility specification forecasts better than alternatives in large systems is an empirical question. 

Another difference between the Cholesky stochastic volatility and the common stochastic volatility is that the latter is order-invariant---i.e., parameter estimates are invariant to reordering the endogenous variables in $\by_t$---whereas the former is not. One practical implication is that forecasts from the Cholesky stochastic volatility model could differ substantially across different variable orderings, as documented in \citet{ARRS23} using a similar model of \citet{Primiceri05}. 

The reason why the Cholesky stochastic volatility is not order-invariant is partly due to the use of the lower triangular parameterization of $\bB_0$ in \eqref{eq:VAR-SV}. Motivated by this simple observation, \citet{CKY24} extend the model by relaxing this lower triangular assumption and instead specify $\bB_0$ to be any non-degenerate square matrix. They prove that the model is order invariant. Moreover, based on the results in \citet{BB22}, $\bB_0$ is also identified up to permutations and sign switches.

The extreme movements in many macroeconomic variables at the onset of the COVID-19 pandemic have motivated much recent work on modeling outliers in macroeconomic time-series. An example is the outlier-augmented stochastic volatility model proposed by \citet{CCMM22}, which builds on the Cholesky stochastic volatility model and the discrete mixture representation for the innovations introduced in \citet{SW16}. This outlier-augmented model can also be represented using~\eqref{eq:Sigmat}.


There are many other multivariate stochastic volatility models for $\vSigma_t$. \citet{CM23} and \citet{chan24} provide two recent reviews on BVARs with a wide range of stochastic volatility and outlier-augmented specifications.

\section{Tensor Decomposition of VAR Coefficients} \label{s:tensor}

For high-dimensional settings, there are two key challenges related to the proliferation of the VAR coefficients. First, the number of VAR coefficients increases quadratically in $n$. In many large-scale applications, there are far more VAR coefficients than the number of observations, which makes it necessary to regularize these VAR coefficients. Second, due to the proliferation of  VAR coefficients, sampling them tends to be very computational intensive. 

These two related challenges are typically tackled separately in the literature. For instance, Bayesian shrinkage priors are widely used to regularize the VAR coefficients. These include the family of Minnesota priors \citep*{DLS84, litterman86, KK93, KK97, GLP15,chan22} and various adaptive hierarchical shrinkage priors \citep{HF19,KP19,KH20,chan21}. The computational challenge is addressed by developing efficient MCMC or variational methods to sample the large number of VAR coefficients \citep{CCM19,CCCM22,GKP23,BBB24}. We instead take an alternative approach that tackles these two challenges simultaneously by imposing a low-rank structure on the VAR coefficients. 

One possibility is the reduced-rank VAR \citep[see, e.g.,][]{CKM11}, in which the matrix $\bA' = (\bA_1, \ldots, \bA_p)$ is assumed to have a reduced rank $R<n$. That is, the dimension of the column space of the VAR coefficient matrices $\bA_1, \ldots, \bA_p$ has rank $R$. Alternatively, one could impose a low-rank structure on $(\bA_1', \ldots, \bA_p')$ or $(\text{vec}(\bA_1), \ldots, \text{vec}(\bA_p))$, whose ranks correspond to the dimensions of the row space and vectorized matrix space of the VAR coefficient matrices, respectively. Each of these options essentially reduces the dimension along one of the three different directions.

This motivates us to follow the approach in \citet{WZLL22} to treat the VAR coefficients collectively as a three-dimensional array or tensor.  That is, we arrange the $n\times n$ coefficient matrices $\bA_1, \ldots, \bA_p$ in the third dimension to form the third-order tensor $\mathcal{A}\in\mathbb{R}^{n\times n\times p}$. With this tensor representation, we then use a rank-$R$ CP decomposition to construct $\mathcal{A}$:
\begin{equation}\label{eq:CP}
	\mathcal{A} = \sum_{r=1}^R \vtheta_1^{(r)}\circ  \vtheta_2^{(r)} \circ \vtheta_3^{(r)},
\end{equation}	
where $\vtheta_1^{(r)}, \vtheta_2^{(r)}\in\mathbb{R}^{n}$ and $\vtheta_3^{(r)} \in\mathbb{R}^{p}$. Each component $\vtheta_1^{(r)}\circ  \vtheta_2^{(r)} \circ \vtheta_3^{(r)}$ is a rank-1 third-order tensor whose $(i,j,k)$ element is the product of the $i$-th, $j$-th and $k$-th elements of $\vtheta_1^{(r)}, \vtheta_2^{(r)}$ and $\vtheta_3^{(r)}$, respectively. Using this tensor decomposition with a small $R$, the number of parameters is reduced from $n^2p$ to $(2n+p)R$. We refer the readers to \cite{TB09} for a general introduction to tensors and their operations.

\section{Bayesian Estimation} \label{s:estimation}

In this section we describe the priors and outline the posterior simulator. In particular, we focus on the sampling of the components $\vtheta_1^{(r)}, \vtheta_2^{(r)}$ and $\vtheta_3^{(r)}, r=1,\ldots, R$. To that end, let $\vTheta_1 = (\vtheta_1^{(1)}, \ldots, \vtheta_1^{(R)})$ and similarly define $\vTheta_2$ and  $\vTheta_3$. The dimensions of $\vTheta_1, \vTheta_2$ and $\vTheta_3$ are, respectively, $n\times R$, $n\times R$ and $p\times R$. For later reference, stack $\vtheta_1 = \text{vec}(\vTheta_1)$, $\vtheta_2 = \text{vec}(\vTheta_2')$ and $\vtheta_3 = \text{vec}(\vTheta_3')$. For reasons that will become transparent later, note that $\vtheta_1$ is constructed by stacking the columns of $\vTheta_1$, whereas $\vtheta_2$ and $\vtheta_3$ are formed by stacking the rows of $\vTheta_2$ and $\vTheta_3$, respectively.

Next, consider the following independent Gaussian priors on $\vtheta_1, \vtheta_2$ and $\vtheta_3$:
\begin{equation} \label{eq:prior}
	\vtheta_j \sim\distn{N}(\vtheta_{j,0}, \bV_{\vtheta_j}), \quad j=1,\ldots, 3.
\end{equation}
Here we consider simple Gaussian priors, but any adaptive hierarchical shrinkage priors that have a  conditionally Gaussian representation---such as the normal-gamma prior, the horseshoe prior or the Dirichlet-Laplace prior---can be used.

In what follows, we derive the full conditional posterior distributions of $(\vtheta_1 \gvn \bY, \vtheta_{2}, \vtheta_{3}, \vSigma)$, $(\vtheta_2  \gvn \bY, \vtheta_{1}, \vtheta_3, \vSigma)$ and $(\vtheta_3 \gvn \bY, \vtheta_{1}, \vtheta_2, \vSigma)$, where $\mathbf{\vSigma}=\text{diag}\left(\vSigma_1, \ldots, \vSigma_T\right)$.  

\subsection{Sampling of $\vtheta_1$}

For sampling $\vtheta_1$, first let
\begin{align*}
	\vTheta_{-1} & = (\vtheta_3^{(1)}\otimes\vtheta_2^{(1)}, \ldots, \vtheta_3^{(R)}\otimes\vtheta_2^{(R)}), \\
	\vTheta_{-2} & = (\vtheta_3^{(1)}\otimes\vtheta_1^{(1)}, \ldots, \vtheta_3^{(R)}\otimes\vtheta_1^{(R)}), \\
	\vTheta_{-3} & = (\vtheta_2^{(1)}\otimes\vtheta_1^{(1)}, \ldots, \vtheta_2^{(R)}\otimes\vtheta_1^{(R)}).
\end{align*}
Then, the mode-1 matricization of $\mathcal{A}$ can be written as \citep[see, e.g.,][]{TB09}:
\begin{equation} \label{eq:A1}
	\mathcal{A}_{(1)} = \bA' = \vTheta_1 \vTheta_{-1}'.
\end{equation}
Furthermore, let  $\mX\in \mathbb{R}^{T\times n \times p}$ denote the third-order tensor constructed by stacking the $n\times p$ matrices $(\by_{t-1},\ldots, \by_{t-p}), t=1,\ldots, T,$ along the first dimension so that $\tvec(\mX_{t,:,:}) = \bx_t$. It is easy to verify that its mode-1 matricization is the $T\times np$ matrix $\bX$, i.e., $ \mX_{(1)}=\bX$. Hence, combining \eqref{eq:var} and \eqref{eq:A1}, we have
\begin{equation} \label{eq:var_A1}
    \bY = \mX_{(1)}\vTheta_{-1}\vTheta_1'+\bU.
\end{equation} 
It follows that 
\[
   \tvec(\bY')=(\mX_{(1)}\vTheta_{-1} \otimes \bI_{n})\tvec(\vTheta_1)+\tvec(\bU'),
\]
where $\tvec(\bU') \sim \mathcal{N}(\mathbf{0}_{Tn}, \vSigma)$ with $\mathbf{\vSigma}=\text{diag}\left(\vSigma_1, \ldots, \vSigma_T\right)$. Given the Gaussian prior on $\vtheta_1 = \tvec(\vTheta_1)$ specified in \eqref{eq:prior}, the full conditional posterior of $\vtheta_1$ is given by
\[
	(\vtheta_1 \gvn \bY, \vtheta_{2}, \vtheta_{3}, \vSigma) 
	\sim \mathcal{N}\left(\hat{\vtheta}_1, \mathbf{K}_{\vtheta_1}^{-1}\right), 
\]
where
\begin{align*}
	\bK_{\vtheta_1} & = \bV_{\vtheta_1}^{-1} + (\vTheta_{-1}'\mX_{(1)}'\otimes \bI_{n})\vSigma^{-1}(\mX_{(1)}\vTheta_{-1} \otimes \bI_{n})\\
\hat{\vtheta}_1   & = \bK_{\vtheta_1}^{-1}\left(\mathbf{V}_{\vtheta_1}^{-1} \vtheta_{1,0}
		+(\vTheta_{-1}'\mX_{(1)}'\otimes \mathbf{I}_n)\vSigma^{-1}\tvec(\bY')\right).
\end{align*}
Since $\vtheta_1$ is of length $nR$, sampling $\vtheta_1$ generally involves $\mO(n^3R^3)$ elementary operations. When both $n$ and $R$ are large, this sampling step could be computationally intensive. An alternative is to sample each $\vtheta_1^{(r)}$ separately, $r=1,\ldots, R$. For a fixed $r$, this can be done by substituting 
\[
	\vTheta_1\vTheta_{-1}' = \sum_{s=1}^R \vtheta_1^{(s)}\vtheta_{-1}^{(s)\prime},
\]
where $\vtheta_{-1}^{(s)} = \vtheta_3^{(s)}\otimes \vtheta_2^{(s)}$, into the transpose of \eqref{eq:var_A1} to obtain
\[
   \bY' = \sum_{s=1}^R\vtheta^{(s)}_{1}\vtheta_{-1}^{(s)\prime}\mX_{(1)}'+ \bU'. 
\]
Next, let $\bY_{1r} = \bY'-\sum_{s \neq r}\vtheta_{1}^{(s)}\vtheta_{-1}^{(s)\prime}\mX_{(1)}'$ and vectorize the above equation, we have
\[
	\tvec(\bY_{1r})= (\mX_{(1)}\vtheta^{(r)}_{-1} \otimes \bI_{n})\vtheta_{1}^{(r)}+\tvec(\bU').
\]
If the marginal prior for $\vtheta_1^{(r)}$ is
\[
	\vtheta_1^{(r)} \sim \distn{N}\left(\vtheta_{1,0}^{(r)}, \bV_{\vtheta_1^{(r)}}\right),
\]
the posterior distribution $(\vtheta_1^{(r)} \gvn \mathbf{Y}, \{\vtheta_{1}^{(s)}\}_{s\neq r}, \vtheta_{2}, \vtheta_{3}, \vSigma)$ has the form
\[
	(\vtheta_1^{(r)} \gvn \mathbf{Y}, \{\vtheta_{1}^{(s)}\}_{s\neq r}, \vtheta_{2}, \vtheta_{3}, \vSigma )
	\sim \distn{N}\left(\hat{\vtheta}_1^{(r)}, \bK_{\vtheta_1^{(r)}}^{-1}\right), 
\]
where
\begin{align*}
	\bK_{\vtheta_1^{(r)}} & = (\vtheta^{^{(r)}\prime}_{-1} \mX_{(1)}' \otimes \bI_{n})\vSigma^{-1}
	(\mX_{(1)}\vtheta^{(r)}_{-1} \otimes \bI_{n})\\
	\hat{\vtheta}_1^{(r)} & = \bK_{\vtheta_1^{(r)}}^{-1}\left(\bV_{\vtheta_1^{(r)}}^{-1} \vtheta_{1,0}^{(r)}
	+ (\vtheta^{(r)\prime}_{-1}\mX_{(1)}' \otimes \bI_{n})\vSigma^{-1}\tvec(\bY_{1r})\right).
\end{align*}
The drawback of sampling each $\vtheta_1^{(r)}$ separately is that this tends to increase the autocorrelation of the constructed Markov chain. But this approach is substantially faster and remains computationally feasible even when both $n$ and $R$ are large.

\subsection{Sampling of $\vtheta_2$ and $\vtheta_3$ }

Next, we derive the conditional distribution of $(\vtheta_2  \gvn \bY, \vtheta_{1}, \vtheta_3, \vSigma)$ and show that it is Gaussian. To start, we aim to write the VAR in \eqref{eq:var} as a linear regression in $\vtheta_2=\tvec(\vTheta_2')$. 

We first introduce some notations. Let $\be_i^p$ be the $i$-th column of $\mathbf{I}_p$. Let $\bP$ denote the $np \times np$  commutation matrix so that $\bP'\text{vec}(\bZ) = \text{vec}(\bZ')$ for any $n\times p$ matrix $\bZ$. Explicitly, $\bP$ can be constructed by setting the $(k,l)$ element to be 1, i.e., $\bP_{k,l} = 1$, if there exist $i$ and $j$ such that $k = (i-1)n+j$ and $l=(j-1)p + i$; otherwise, set $\bP_{k,l}=0$.

\begin{proposition}\rm \label{prop:1} The VAR in \eqref{eq:var} can be written as
\begin{equation} \label{eq:Theta2}
	\tvec(\bY') = \sum_{i=1}^{p}(\bP_{i2}\mX'_{(2)} \otimes \bP_{i1}\vTheta_{-2})\vtheta_2 + \tvec(\bU'),
\end{equation}
where $\bP_{i1}=(\bI_{n}\otimes (\be^p_i)')\bP'$, $\bP_{i2} = (\be^p_i)'\otimes\bI_{T}$  and $\mX_{(2)}$ is the mode-2 matricization of $\mX$.
\end{proposition}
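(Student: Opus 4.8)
The plan is to avoid manipulating the mode-2 matricization head-on and instead rewrite \eqref{eq:var} one lag at a time, isolating $\vTheta_2'$ as the inner factor of each coefficient slice, and only at the very end translate the resulting Kronecker structure into the commutation-matrix notation of the statement. First I would expand the VAR lag by lag as $\bY = \sum_{i=1}^p \bY_{-i}\bA_i' + \bU$, where $\bY_{-i}$ is the $T\times n$ block of $\bX$ whose $t$-th row is $\by_{t-i}'$ (so that $\bX = (\bY_{-1},\ldots,\bY_{-p})$). Transposing gives $\bY' = \sum_{i=1}^p \bA_i\bY_{-i}' + \bU'$, where the summation index $i$ now ranges over lags $1,\ldots,p$, matching the range in \eqref{eq:Theta2}.

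Next I would read off from the CP decomposition \eqref{eq:CP} that the $i$-th frontal slice is $\bA_i = \mA_{:,:,i} = \vTheta_1\bD_i\vTheta_2'$, where $\bD_i = \text{diag}([\vtheta_3^{(1)}]_i,\ldots,[\vtheta_3^{(R)}]_i)$ collects the $i$-th entries of the third-mode factors; this is immediate from $[\bA_i]_{jk} = \sum_r [\vtheta_1^{(r)}]_j[\vtheta_2^{(r)}]_k[\vtheta_3^{(r)}]_i$. Substituting into $\bY'$ and vectorizing each summand with the identity $\tvec(\bA\bB\bC) = (\bC'\otimes\bA)\tvec(\bB)$, applied with $\bB = \vTheta_2'$, yields
\[
\tvec(\bY') = \sum_{i=1}^p (\bY_{-i}\otimes\vTheta_1\bD_i)\,\vtheta_2 + \tvec(\bU'),
\]
since $\vtheta_2 = \tvec(\vTheta_2')$. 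This already has the exact shape of \eqref{eq:Theta2}, so it remains only to identify the two Kronecker factors as $\bP_{i2}\mX_{(2)}' = \bY_{-i}$ and $\bP_{i1}\vTheta_{-2} = \vTheta_1\bD_i$.

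The first identification is routine matricization bookkeeping: the transpose $\mX_{(2)}'$ stacks the $p$ lag blocks $\bY_{-1},\ldots,\bY_{-p}$ vertically (in the Kolda--Bader column ordering of \citet{TB09}), and $\bP_{i2} = (\be_i^p)'\otimes\bI_T$ simply selects the $i$-th block. The second is the crux of the argument and the place where the commutation matrix earns its keep. Working column by column, I would use $\ba\otimes\bb = \tvec(\bb\ba')$ to write the $r$-th column of $\vTheta_{-2}$ as $\vtheta_3^{(r)}\otimes\vtheta_1^{(r)} = \tvec(\vtheta_1^{(r)}(\vtheta_3^{(r)})')$, apply $\bP'\tvec(\bZ) = \tvec(\bZ')$ to flip the Kronecker order to $\vtheta_1^{(r)}\otimes\vtheta_3^{(r)}$, and then let $\bI_n\otimes(\be_i^p)'$ extract the $i$-th third-mode coordinate, producing $[\vtheta_3^{(r)}]_i\,\vtheta_1^{(r)}$, which is exactly the $r$-th column of $\vTheta_1\bD_i$.

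I expect the main obstacle to be precisely this last step, because $\vtheta_1$ and $\vtheta_2$ are vectorized with opposite conventions ($\vtheta_1$ stacks the columns of $\vTheta_1$ while $\vtheta_2$ stacks the rows of $\vTheta_2$), and it is this mismatch that forces the reordering of the Kronecker factors inside $\vTheta_{-2}$. The delicate part is verifying that the commutation matrix $\bP$, defined for $n\times p$ arrays, performs exactly this reordering and that $(\be_i^p)'$ then isolates the correct lag. Everything else reduces to matching column orderings in the mode-2 matricizations against the Kolda--Bader convention, which I would state once and invoke uniformly for both $\mX_{(2)}$ and $\vTheta_{-2}$.
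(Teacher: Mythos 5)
Your proof is correct, but it takes a genuinely different route from the paper's. The paper first proves a preparatory lemma in Appendix A expressing the VAR as $\bY = \sum_{i=1}^{p}(\be_i'\otimes\bI_T)\bE\bP(\bI_n\otimes\be_i) + \bU$ with $\bE = \mX_{(2)}'\mA_{(2)}$, established by a brute-force element-by-element index computation; it then transposes, vectorizes, and invokes the mode-2 matricization identity $\mA_{(2)} = \vTheta_2\vTheta_{-2}'$ to pull out $\vtheta_2$. You never touch $\mA_{(2)}$ at all: you expand the VAR lag by lag, use the frontal-slice form $\bA_i = \vTheta_1\bD_i\vTheta_2'$ that is immediate from the CP decomposition \eqref{eq:CP}, vectorize each summand with $\tvec(\bA\bB\bC) = (\bC'\otimes\bA)\tvec(\bB)$, and reduce the claim to the two identifications $\bP_{i2}\mX_{(2)}' = \bY_{-i}$ and $\bP_{i1}\vTheta_{-2} = \vTheta_1\bD_i$, the latter via the column-wise chain $\vtheta_3^{(r)}\otimes\vtheta_1^{(r)} = \tvec\bigl(\vtheta_1^{(r)}\vtheta_3^{(r)\prime}\bigr)$, $\bP'\tvec(\bZ)=\tvec(\bZ')$, and extraction by $\bI_n\otimes(\be_i^p)'$. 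All of these steps check out under the paper's conventions (the Kolda--Bader ordering for $\mX_{(2)}$, the row-stacking convention $\vtheta_2=\tvec(\vTheta_2')$, and the stated defining property of $\bP$), so the argument is complete. What your route buys is economy and transparency: every step is a standard Kronecker/vec identity, the diagonal-slice representation makes the role of each CP factor visible, and it isolates exactly why the commutation matrix is needed --- to undo the mismatch between the factor ordering inside $\vTheta_{-2}$ and the slice structure. What the paper's route buys is reusability: its lemma and the intermediate equation \eqref{eq:proof1} are stated for a generic coefficient matricization, so the same machinery transfers essentially verbatim to Proposition~\ref{prop:2} (mode-3) and is recycled in Appendix C to derive the column-by-column sampler for the $\vtheta_2^{(r)}$; to serve those purposes your argument would need to retain an analogous intermediate form (e.g., $\bY' = \sum_{i=1}^p \bP_{i1}\vTheta_{-2}\vTheta_2'\mX_{(2)}\bP_{i2}' + \bU'$) before substituting the rank-one sum.
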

The proof of this proposition is given in Appendix A. Now, given the representation in \eqref{eq:Theta2} and the Gaussian prior on $\vtheta_2$ specified in \eqref{eq:prior}, by standard linear regression results, one can verify that the full conditional posterior of $\vtheta_2$ is given by
\[
	(\vtheta_2  \gvn \bY, \vtheta_{1}, \vtheta_3, \vSigma) \sim 
		\mathcal{N}\left(\hat{\vtheta}_2, \mathbf{K}_{\vtheta_2}^{-1}\right), 
\]
where
\begin{align*}
	\mathbf{K}_{\vtheta_2} & = \mathbf{V}_{\vtheta_2}^{-1} + 
	 \sum_{i=1}^{p}\sum_{j=1}^{p}(\mX_{(2)}\bP'_{i2}\otimes \vTheta'_{-2}\bP'_{i1})\vSigma^{-1}(\bP_{j2}\mX'_{(2)}\otimes \bP_{j1}\vTheta_{-2}),\\ 
  \hat{\vtheta}_2 & = \mathbf{K}_{\vtheta_2}^{-1}\left(\mathbf{V}_{\vtheta_2}^{-1} \vtheta_{2,0} 
	+\sum_{i=1}^{p}(\mX_{(2)}\bP'_{i2}\otimes \vTheta'_{-2}\bP'_{i1})\vSigma^{-1}\tvec(\bY')\right).
\end{align*}
Therefore, one can sample $\vtheta_2$ in one block. When $n$ and $R$ are large, it might only be feasible to sample each $\vtheta_2^{(r)}$ at a time for $r=1,\ldots, R$. We provide the details of this alternative approach in Appendix C.

Likewise, the conditional distribution of $(\vtheta_3  \gvn \bY, \vtheta_{1}, \vtheta_2, \vSigma)$ can be shown to be Gaussian. More specifically, we first write the VAR in \eqref{eq:var} as a linear regression in $\vtheta_3=\tvec(\vTheta_3')$. To that end, let $\be_i^n$ denote the $i$-th column of $\bI_n$, let $\mX_{(3)}$ represent the mode-3 matricization of $\mX$, and define $\bQ$ to be the $n^2 \times n^2$ commutation matrix so that $\bQ'\text{vec}(\bZ) = \text{vec}(\bZ')$ for any $n\times n$ matrix $\bZ$. Explicitly, $\bQ$ can be constructed by setting the $\bQ_{k,l} = 1$ if there exist $i$ and $j$ such that $k = (i-1)n+j$ and $l = (j-1)n + i$; otherwise, set $\bQ_{k,l} = 0$. 

\begin{proposition}\rm \label{prop:2} The VAR in \eqref{eq:var} can be represented as
\begin{equation} \label{eq:Theta3}
	\tvec(\bY') = \sum_{i=1}^{n}(\bQ_{i2}\mX'_{(3)} \otimes \bQ_{i1}\vTheta_{-3})\vtheta_3 + \tvec(\bU'),
\end{equation}
where  $\bQ_{i1} = (\bI_{n}\otimes (\be^n_i)')\bQ'$ and $\bQ_{i2} = (\be^n_i)'\otimes\bI_{T}$.
\end{proposition}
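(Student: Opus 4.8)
The plan is to follow the same route as the proof of Proposition~\ref{prop:1}, now reorganizing the VAR along the variable index rather than the lag index. First I would transpose \eqref{eq:var} to get $\bY' = \sum_{k=1}^p \bA_k \bY_{-k}' + \bU'$, where $\bY_{-k}$ is the $T\times n$ matrix of $k$-period lagged observations, using $\bA' = \mathcal{A}_{(1)} = (\bA_1,\ldots,\bA_p)$ from \eqref{eq:A1}. The crucial reorganization is to expand each matrix product columnwise across the $n$ variables, $\bA_k\bY_{-k}' = \sum_{i=1}^n (\bA_k)_{:,i}\big((\bY_{-k})_{:,i}\big)'$, and then interchange the order of summation. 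This exposes the outer index $i=1,\ldots,n$ that appears in \eqref{eq:Theta3}, giving $\bY'-\bU' = \sum_{i=1}^n\sum_{k=1}^p (\bA_k)_{:,i}\big((\bY_{-k})_{:,i}\big)'$.

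Next I would identify the two building blocks in \eqref{eq:Theta3} with concrete matrices. For the data factor, using $\bQ_{i2} = (\be_i^n)'\otimes\bI_T$ together with the fact that the $k$-th column of $\mX_{(3)}'$ is $\tvec(\bY_{-k})$ (from the mode-3 matricization of $\mX$), I would show that $\bW_i := \bQ_{i2}\mX_{(3)}'$ is the $T\times p$ matrix whose $k$-th column is $(\bY_{-k})_{:,i}$, i.e.\ the lagged series of variable $i$ across the $p$ lags. For the coefficient factor, the commutation matrix $\bQ$ sends $\vtheta_2^{(r)}\otimes\vtheta_1^{(r)} = \tvec\big(\vtheta_1^{(r)}(\vtheta_2^{(r)})'\big)$ to $\vtheta_1^{(r)}\otimes\vtheta_2^{(r)}$, and applying $\bI_n\otimes(\be_i^n)'$ then selects the $i$-th entry of $\vtheta_2^{(r)}$, so that $\bH_i := \bQ_{i1}\vTheta_{-3}$ is the $n\times R$ matrix with $r$-th column $(\vtheta_2^{(r)})_i\,\vtheta_1^{(r)}$. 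These two identities are the heart of the argument and the part demanding the most care, since one must track exactly how $\bQ$ swaps the roles of $\vtheta_1^{(r)}$ and $\vtheta_2^{(r)}$ inside the Khatri--Rao columns of $\vTheta_{-3}$ and how the selection vector $(\be_i^n)'$ acts on the second factor.

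With these in hand I would apply the identity $(\bC'\otimes\bA)\tvec(\bB) = \tvec(\bA\bB\bC)$ to each summand, with $\bA = \bH_i$, $\bB = \vTheta_3'$ and $\bC = \bW_i'$, giving $(\bW_i\otimes\bH_i)\vtheta_3 = \tvec(\bH_i\vTheta_3'\bW_i')$ since $\vtheta_3 = \tvec(\vTheta_3')$. It then remains to check that $\bH_i\vTheta_3'\bW_i'$ is exactly the $i$-th term of the columnwise expansion above. Reading off the CP decomposition \eqref{eq:CP}, $\bA_k = \sum_{r=1}^R (\vtheta_3^{(r)})_k\,\vtheta_1^{(r)}(\vtheta_2^{(r)})'$, so $\bH_i\vTheta_3' = \sum_r (\vtheta_2^{(r)})_i\,\vtheta_1^{(r)}(\vtheta_3^{(r)})'$ is the $n\times p$ matrix whose $k$-th column is $(\bA_k)_{:,i}$, while the $k$-th row of $\bW_i'$ is $\big((\bY_{-k})_{:,i}\big)'$; hence $\bH_i\vTheta_3'\bW_i' = \sum_k (\bA_k)_{:,i}\big((\bY_{-k})_{:,i}\big)'$. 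Summing over $i$ recovers $\sum_k\bA_k\bY_{-k}' = \bY'-\bU'$, and vectorizing yields \eqref{eq:Theta3}. Because the statement is the exact mode-3 analogue of Proposition~\ref{prop:1} (with $p$, $\bP$, $\mX_{(2)}$, $\vTheta_{-2}$ replaced by $n$, $\bQ$, $\mX_{(3)}$, $\vTheta_{-3}$), I expect the argument to mirror Appendix~A almost line for line; the only genuinely new bookkeeping is the swap of $\vtheta_1$ and $\vtheta_2$ effected by $\bQ$.
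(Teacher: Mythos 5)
Your proof is correct --- I checked the identification of both factors and the final matching, and every step goes through --- but it is organized quite differently from the paper's argument. The paper proves Proposition~\ref{prop:1} (and asserts Proposition~\ref{prop:2} follows analogously) by first establishing an auxiliary matrix identity, Lemma~1 in Appendix~A, via raw elementwise index-chasing: explicit computation of the entries of $\bE\bP$ with $\bE = \mX_{(2)}'\mA_{(2)}$, indicator functions tracking which index pairs survive, and a term-by-term match of $(t,l)$ entries. Only after that lemma does the paper transpose, vectorize, and inject the CP structure $\mA_{(2)} = \vTheta_2\vTheta_{-2}'$ to pull out $\vtheta_2$. You instead bypass any such lemma: you start from the outer-product expansion $\bY'-\bU' = \sum_{i=1}^n\sum_{k=1}^p (\bA_k)_{:,i}\bigl((\bY_{-k})_{:,i}\bigr)'$, identify the two blocks concretely --- $\bW_i = \bQ_{i2}\mX_{(3)}'$ as the lag history of variable $i$, and $\bH_i = \bQ_{i1}\vTheta_{-3}$ as the matrix with columns $(\vtheta_2^{(r)})_i\,\vtheta_1^{(r)}$, obtained from the mixed-product rule and the fact that $\bQ'(\vtheta_2^{(r)}\otimes\vtheta_1^{(r)}) = \vtheta_1^{(r)}\otimes\vtheta_2^{(r)}$ --- and then close the argument with the identity $(\bW_i\otimes\bH_i)\tvec(\vTheta_3') = \tvec(\bH_i\vTheta_3'\bW_i')$ and the CP form of the columns $(\bA_k)_{:,i}$. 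What each approach buys: the paper's index bookkeeping is mechanical and mode-agnostic (the lemma holds for an arbitrary coefficient tensor, with the CP structure entering only at the last step), whereas your argument is shorter, stays at the level of blocks rather than scalar entries, and makes transparent \emph{why} the commutation matrix and the selection vectors $\be_i^n$ appear --- they slice out variable $i$'s lagged data and the $i$-th columns of the $\bA_k$'s. You also correctly flag the one genuinely delicate point, namely that $\bQ$ swaps the roles of $\vtheta_1^{(r)}$ and $\vtheta_2^{(r)}$ inside the columns of $\vTheta_{-3}$ before the selection acts; getting that order wrong is the natural failure mode here, and your bookkeeping handles it properly.
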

Given the representation in \eqref{eq:Theta3}, it is easy to verify that 
\[
	(\vtheta_3  \gvn \bY, \vtheta_{1}, \vtheta_2, \vSigma) \sim 
	\mathcal{N}\left(\hat{\vtheta}_3, \bK_{\vtheta_3}^{-1}\right),
\]
where
\begin{align*}
	\bK_{\vtheta_3} & = \bV_{\vtheta_3}^{-1} + \sum_{i=1}^{n}\sum_{j=1}^{n}
	(\mX_{(3)}\bQ'_{i2}\otimes \vTheta'_{-3}\bQ'_{i1})\vSigma^{-1}(\bQ_{j2}\mX'_{(3)}\otimes \bQ_{j1}\vTheta_{-3})\\
	\hat{\vtheta}_3 & = \bK_{\vtheta_3}^{-1}\left(\bV_{\vtheta_3}^{-1} \vtheta_{3,0} 
	+ \sum_{i=1}^{n}(\mX_{(3)}\bQ'_{i2}\otimes \vTheta^{'}_{-3}\bQ'_{i1})\vSigma^{-1}\tvec(\bY')\right).
\end{align*}
Alternatively, one can sample each $\vtheta_3^{(r)}$ at a time for $r=1,\ldots, R$ when $n$ and $R$ are large as before.

Finally, sampling the time-varying error covariance matrices $\vSigma_1,\ldots, \vSigma_T$ naturally depends on the stochastic volatility specification used. For a wide variety of stochastic volatility specifications commonly employed in applied work, such as the common stochastic volatility and Cholesky stochastic volatility discussed earlier, efficient algorithms are available to sample the latent variables and model parameters. We refer the readers to \citet{chan23JE} for more details.

\section{An Empirical Application} \label{s:apps}

We conduct an out-of-sample forecasting exercise to evaluate the performance of the proposed tensor VARs compared to a standard benchmark. More specifically, we construct a dataset of 40 quarterly macroeconomic and financial variables from the FRED-QD database \citep{MN20}. The sample spans from 1969Q1 to 2024Q1. The data include key macroeconomic variables such as GDP, inflation rates, labor market variables and various interest rates. We refer the readers to Appendix B for the detailed description of the time series and their transformations. 

We consider three tensor VARs: a homoskedastic tensor VAR (TVAR), TVARs with the common volatility (TVAR-CSV) and the Cholesky stochastic volatility (TVAR-SV). For now, we set the rank of the VAR coefficient tensor $\mA$ to be $R=1$ for all TVARs. As a benchmark, we also include a standard Bayesian VAR (BVAR) with the Minnesota prior (implemented as the natural conjugate prior). The evaluation period of the forecasting exercise begins in 2010Q1 and ends in 2024Q1.

To assess the performance of jointly forecasting all $n=40$ time series, we calculate the average log predictive likelihoods for each model over one- and four-quarter-ahead forecast horizons. The results are presented in Table~\ref{tab:dens1}. Higher values of log predictive likelihoods signify better forecast performance.
\begin{table}[H]
\centering
\caption{Joint density forecast performance of the proposed tensor VARs relative to a standard Bayesian VAR with the Minnesota prior.} \label{tab:dens1} 
\begin{tabular}{lcccc}
\hline\hline               
           & BVAR  & TVAR & TVAR-CSV & TVAR-SV            \\ \hline
One-quarter-ahead  & $-$8.01 & $-$3.30   & $-$1.31  & $-$2.32  \\
\rowcolor{lightgray}
Four-quarter-ahead & $-$8.53 & $-$3.30   & $-$2.21  & $-$2.66  \\
\hline\hline
\end{tabular}
\end{table}

A few observations can be drawn from these forecasting results. Firstly, comparing the two homoskedastic models, TVAR and BVAR, it is clear that the former substantially outperforms the latter, suggesting that the specification of a low rank structure on the VAR coefficient tensor $\mA$ is more appropriate. Secondly, allowing some form of time-varying volatility clearly improves forecast performance. For example, extending a homoskedastic TVAR to a version with the common stochastic volatility increases the average log predictive likelihood from $-8$ to about $-3$ for one-step-ahead density forecasts. This finding is line with the large body of empirical evidence that demonstrates the importance of allowing time-varying volatility in macroeconomic forecasting \citep{clark11, DGG13, chan23JE}. Finally, among the two TVARs with stochastic volatility, the version with the parsimonious common stochastic volatility outperforms the one with the more flexible Cholesky stochastic volatility, indicating strong comovements in the macroeconomic volatilities. 

Next, we look at the point forecast performance of the TVARs for individual time series. In particular, Table \ref{tab:point1} reports the root mean squared forecast errors (RMSFEs) of a few key macroeconomic variables over the evaluation period relative to the benchmark BVAR. Values less than one indicate better forecast performance than the benchmark.

\begin{table}[H]
\caption{Root mean squared forecast errors of the proposed tensor VARs relative to a standard Bayesian VAR with the Minnesota prior.}
 \label{tab:point1} 
\centering
\resizebox{\textwidth}{!}{
\begin{tabular}{lcccccc}
\hline\hline
  & \multicolumn{3}{c}{One-quarter-ahead} & \multicolumn{3}{c}{Four-quarter-ahead} \\ 
  & TVAR    & TVAR-CSV    & TVAR-SV  & TVAR  & TVAR-CSV  & TVAR-SV      \\ \hline
RPI                     & 0.93         & 0.93         & 0.93         & 0.99        & 0.99          & 0.98  \\
\rowcolor{lightgray}
GDP                     & 0.65         & 0.64         & 0.78         & 0.82        & 0.82          & 0.85  \\
Unemployment            & 0.68         & 0.68         & 0.72         & 0.86        & 0.86          & 0.86  \\
\rowcolor{lightgray}
CPI                     & 1.03         & 1.02         & 1.14         & 0.89        & 0.89          & 0.93  \\
Fed funds rate          & 0.83         & 0.83         & 1.46         & 0.60        & 0.60          & 0.65  \\
\rowcolor{lightgray}
10-year T-bond          & 0.92         & 0.92         & 0.95         & 0.82        & 0.82          & 0.85    \\ \hline\hline            
\end{tabular}}
\end{table}

The results show that the TVARs tend to outperform the benchmark BVAR for both one- and four-quarter-ahead horizons. For example, the RMSFE of the homoskedastic TVAR for forecasting GDP is only 65\% of that of the benchmark. Interestingly, allowing time-varying volatility does not appear to improve point forecasts. In fact, the more flexible TVAR-SV often provides inferior point forecasts compared to the homoskedastic TVAR (though the performance of the more parsimonious TVAR-CSV is virtually identical to TVAR).

So far we have set $R$, the rank of the VAR coefficient tensor $\mA$, to be 1 for all TVARs. Naturally, one might wonder how this choice affects the density and point forecast performance. To investigate the impact of the choice of $R$, we present in Table~\ref{tab:dens2} the one-quarter-ahead average log predictive likelihoods of the three TVARs for $R=1,3,5,10$.

\begin{table}[H]
\caption{One-quarter-ahead joint density forecast performance of the proposed tensor VARs for $R=1,3,5,10$.} 
\label{tab:dens2} 
\centering
\begin{tabular}{lccc}\hline\hline
       & TVAR   & TVAR-CSV & TVAR-SV \\ \hline\hline
$R=1$  & $-$3.30  & $-$1.31    & $-$2.32   \\
\rowcolor{lightgray}
$R=3$  & $-$3.27  & $-$1.39    & $-$2.36   \\
$R=5$  & $-$3.40  & $-$1.26    & $-$2.48   \\
\rowcolor{lightgray}
$R=10$ & $-$3.53  & $-$1.29    & $-$2.53   \\ \hline\hline
\end{tabular}
\end{table}

It is interesting to note that the choice of $R$ does not seem to have a large impact on the joint density forecast performance across the three TVARs. In particular, it does not affect the relative ranking of the three models; the two TVARs with time-varying volatility outperform the homoskedastic version regardless of the rank $R$. Overall, low-rank TVARs tend to work as well as, if not better than, TVARs with $R=10$.

Table~\ref{tab:point2} reports the one-quarter-ahead RMSFEs of the three TVARs for forecasting a few key macroeconomic variables. To facilitate comparison, the results for each TVAR are relative to the corresponding TVAR with $R=1$. With the exception of CPI inflation, increasing the rank $R$ does not appear to substantially improve point forecast performance. 

\begin{table}[H]
\caption{One-quarter-ahead RMSFEs of TVARs with $R=3,5,10$ relative to the corresponding TVARs with $R=1$.}
 \label{tab:point2} 
\centering
\resizebox{\textwidth}{!}{
\begin{tabular}{lccccccccc}
\hline\hline
    & \multicolumn{3}{c}{$R=3$} & \multicolumn{3}{c}{$R=5$}  & \multicolumn{3}{c}{$R=10$} \\ 
    & TVAR   & TVAR-CSV    & TVAR-SV   & TVAR   & TVAR-CSV    & TVAR-SV & TVAR   & TVAR-CSV    & TVAR-SV \\ \hline
RPI           & 1.01   & 1.00  & 1.01   & 1.01  & 1.00  & 1.02   & 1.01   & 1.00  & 1.03 \\
\rowcolor{lightgray}
GDP           & 1.04   & 1.02  & 0.99   & 1.08  & 1.05  & 1.05   & 1.12   & 1.07  & 1.12  \\
Unemployment  & 1.02   & 1.01  & 1.00   & 1.02  & 1.01  & 1.05   & 1.05   & 1.01  & 1.15  \\
\rowcolor{lightgray}
CPI inflation & 0.97   & 0.95  & 1.06   & 0.96  & 0.95  & 0.90   & 0.89   & 0.90  & 0.94  \\
Fed funds rate& 1.08   & 1.11  & 1.08   & 1.24  & 1.21  & 1.05   & 1.20   & 1.18  & 0.99  \\
\rowcolor{lightgray}
10-year T-bond& 1.00   & 1.00  & 1.22   & 0.99  & 1.00  & 1.17   & 0.98   & 0.99  & 1.16  \\ \hline\hline  
\end{tabular}}
\end{table}

Overall, these forecasting results demonstrate the benefits of the proposed approach of specifying a low rank structure on the VAR coefficient tensor. In addition, our results highlight the importance of accommodating time-varying volatility in forecasting macroeconomic variables. Notably, in our forecasting exercise, the more parsimonious TVAR-CSV tends to forecast better than the more flexible TVAR-SV.

\section{Concluding Remarks and Future Research} \label{s:conclusions}

We have developed Bayesian tensor VARs in which the VAR coefficients are arranged as a third-order tensor and parameterized using a rank-$R$ CP decomposition. We then introduced efficient sampling algorithms to simulate the components of the tensor decomposition. Through a forecasting exercise, we showed that these TVARs outperform the standard BVAR with the Minnesota prior. 

For future research, it would be useful to extend these TVARs to allow for time-varying coefficients in the mean equations. This tensor framework is especially suitable for developing time variation in the VAR coefficients, as the number of free parameters grows only linear in $n$.
Of course, for large systems one might need additional shrinkage. In those cases, one can consider either the static shrinkage approach in \citet{chan23} or the dynamic shrinkage approaches proposed in \citet{KK18} and \citet{HKO19}.

\newpage

\section*{Appendix A: Proofs of Propositions}

In this appendix we provide a proof of Proposition~\ref{prop:1}. The proof of Proposition~\ref{prop:2} is very similar and is thus omitted. We first introduce some useful notations. Let $\be_i^m$ denote the $i$-th column of $\bI_m$; if there is no ambiguity about the dimension, we simply write $\be_i$. For a generic third-order tensor $\mZ$, let $\mZ_{(k)}$ denote its mode-$k$ matricization for $k=1,2,3$. We use the notation $\mZ_{i_1,i_2,i_3}$ to represent the $(i_1,i_2,i_3)$ tensor element of $\mZ$ and 
$\mZ_{(k),i,j}$ to denote the $(i,j)$ element of $\mZ_{(k)}$.

As an example, consider the third-order tensor $\mX\in\mathbb{R}^{T\times n\times p}$. Its mode-2 matricization $\mX_{(2)}$ is an $n\times Tp$ matrix, and the $(i_1,i_2,i_3)$ tensor element 
$\mX_{i_1,i_2,i_3}$  maps to the $(i_2,j)$ matrix element $\mX_{(2),i_2,j}$, where $j = (i_3-1)T + i_1$. Conversely, the $(i,j)$ matrix element maps to  the $(i_1,i,i_3)$ tensor element, where $i_1 = \text{mod}(j,T)$ is the remainder after division of $j$ by $T$, and $i_3 = \lceil{j/T}\rceil$ is the least integer greater than or equal to $j/T$. It can be easily verified that $j = (i_3-1)T + i_1$.

Naturally, any matrix multiplication involving matrices constructed by mode-$k$ matricization can be written in terms of the elements in the original tensors. For example, consider the $(i,j)$ element of the matrix $\bE = \mX'_{(2)} \mA_{(2)}$. Let $i_3 = \lceil{i/T}\rceil$ and $i_1 = \text{mod}(i,T)$ so that $i = (i_3-1)T + i_1$. Similarly, obtain integers $j_1$ and $j_3$ so that $j = (j_3-1)n + j_1$. Then, we can write $\bE_{i,j}$ in terms of elements in $\mX$ and $\mA$:
\begin{equation} \label{eq:E}
\begin{split}
	\bE_{i,j} & = \sum_{k = 1}^{n} \mX_{(2),k,i} \mA_{(2),k,j} \\
	& =\sum_{k = 1}^{n} \mX_{(2),k,(i_3-1)T + i_1} \mA_{(2),k,(j_3-1)T + j_1}
	= \sum_{k = 1}^{n} \mX_{i_1,k,i_3} \mA_{j_1,k,j_3}.
\end{split}
\end{equation}

Next, we introduce a useful lemma. 
\begin{lemma}\label{lemma:3} \rm The VAR in \eqref{eq:var} can be written as
\begin{equation} \label{eq:tildeU}
	\bY = \sum_{i=1}^{p}(\be'_{i}\otimes \bI_T)\bE\bP(\bI_n\otimes \be_{i}) + \bU,
\end{equation}
where $\be_i \equiv \be_i^p$ is the $i$-th column of $\bI_p$, $\bE = \mX'_{(2)} \mA_{(2)}$, and $\bP$ is the $np \times np$ commutation matrix in which $\bP_{k,l} = 1$ if there exist integers $i$ and $j$ such that $k = (i-1)n+j$ and $l=(j-1)p +i$; otherwise, $\bP_{k,l}=0$.
\end{lemma}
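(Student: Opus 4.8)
The plan is to establish \eqref{eq:tildeU} by an entrywise comparison. Since the error term $\bU$ enters \eqref{eq:var} and \eqref{eq:tildeU} identically, it suffices to show that the $(t,j)$ entry of $\sum_{i=1}^{p}(\be'_{i}\otimes\bI_T)\bE\bP(\bI_n\otimes\be_i)$ equals the $(t,j)$ entry of $\bX\bA$ for all $t\in\{1,\ldots,T\}$ and $j\in\{1,\ldots,n\}$. First I would rewrite the target in tensor coordinates: using $\bX=\mX_{(1)}$ and $\bA'=\mA_{(1)}$ together with the mode-$1$ indexing convention in which the column index $l=(i_3-1)n+i_2$ carries the tensor indices $(i_2,i_3)$, a short computation gives $(\bX\bA)_{t,j}=\sum_{i_3=1}^{p}\sum_{i_2=1}^{n}\mX_{t,i_2,i_3}\,\mA_{j,i_2,i_3}$.

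Next I would unpack the right-hand side one factor at a time for a fixed summation index, which I rename $a$ to avoid a clash. The leading factor $\be'_{a}\otimes\bI_T$ is a $T\times Tp$ block selector, so left-multiplying $\bE$ by it extracts the rows $(a-1)T+1,\ldots,aT$ of $\bE$; combined with the entry formula \eqref{eq:E} for $\bE=\mX'_{(2)}\mA_{(2)}$, the $(t,m)$ entry of $(\be'_{a}\otimes\bI_T)\bE$ equals $\sum_{k}\mX_{t,k,a}\,\mA_{j_1,k,j_3}$, where the column index is written as $m=(j_3-1)n+j_1$. I would then read the action of $\bP$ from its explicit definition ($\bP_{k,l}=1$ precisely when $k=(i-1)n+j$ and $l=(j-1)p+i$): right-multiplication by $\bP$ permutes columns, carrying the column at position $(j_3-1)n+j_1$ to position $(j_1-1)p+j_3$. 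Finally, $\bI_n\otimes\be_a$ is an $np\times n$ column selector whose $j''$-th column is the standard basis vector at position $(j''-1)p+a$, so right-multiplication retains exactly the columns for which $j_3=a$, leaving a $T\times n$ matrix with $(t,j'')$ entry $\sum_{k}\mX_{t,k,a}\,\mA_{j'',k,a}$.

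Summing over $a=1,\ldots,p$ then produces $\sum_{a=1}^{p}\sum_{k}\mX_{t,k,a}\,\mA_{j'',k,a}$, which matches the expression for $(\bX\bA)_{t,j''}$ obtained in the first step under the relabeling $a\leftrightarrow i_3$, $k\leftrightarrow i_2$, completing the argument. The step demanding the most care is the middle composition: tracking a single pair of tensor coordinates $(j_1,j_3)$ as it is relabeled by the block-row selector, permuted by the commutation matrix $\bP$, and then filtered by the Kronecker column selector $\bI_n\otimes\be_a$, and checking that these three operations compose so that only the lag slice $j_3=a$ survives with its row index landing in the correct column. The remainder is routine index bookkeeping, since \eqref{eq:E} already translates the matrix product $\mX'_{(2)}\mA_{(2)}$ back into tensor elements.
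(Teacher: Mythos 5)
Your proof is correct and takes essentially the same route as the paper's: both arguments reduce the identity to an entrywise comparison, showing that the $(t,l)$ entry of $\sum_{i=1}^{p}(\be'_{i}\otimes \bI_T)\bE\bP(\bI_n\otimes \be_{i})$ collapses to $\sum_{k=1}^{p}\sum_{s=1}^{n}\mX_{t,s,k}\,\mA_{l,s,k}$ by tracking how $\bP$ re-indexes the columns of $\bE$ and then invoking the entry formula \eqref{eq:E} for $\bE=\mX'_{(2)}\mA_{(2)}$. The only difference is presentational: the paper expands the Kronecker factors into indicator-function sums and collapses them, whereas you read the same factors as a block-row selector, a column permutation, and a column selector and follow one column index through the composition --- identical bookkeeping, organized more structurally.
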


\begin{proof}[Proof of Lemma \ref{lemma:3}]  First, note that from 
\[
	\bY = \mX_{(1)}\mA_{(1)}'+\bU,
\]
the element $\bY_{t,l}$ can be expressed as
\[
	\bY_{t,l} = \sum_{i=1}^{np}\mX_{(1),t,i}\mA_{(1),l,i} + \bU_{t,l} 
	= \sum_{i_1=1}^{n}\sum_{i_3=1}^{p}\mX_{t,i_1,i_3}\mA_{l,i_1,i_3}+\bU_{t,l},
\]
where $i = (i_3-1)n+i_1$. Now, it suffices to show that the $(t,l)$ element of the sum on the right-hand side of equation \eqref{eq:tildeU} has the same expression and is thus equal to $\bY_{t,l}$. 

We start by describing the typical element of the $Tp \times np$ matrix $\bE\bP$. For any $(i,j)$, obtain the pairs of integers $(i_1,i_3)$ and $(j_1,j_3)$ such that $i= (i_3-1)T+i_1$ and $j=(j_3-1)p+j_1$. Then, by direct computation, we have
\begin{align*}
    (\bE\bP)_{i,j} = &(\bE\bP)_{(i_3-1)T+i_1,(j_3-1)p+j_1}\\
		 = & \sum_{s=1}^{np} \bE_{(i_3-1)T+i_1,s}\bP_{s,(j_3-1)p+j_1}\\
		 = & \sum_{k=1}^{p}\sum_{l=1 }^n \bE_{(i_3-1)T+i_1,(k-1)n+l} \bP_{(k-1)n+l,(j_3-1)p+j_1}\\
		  = &\sum_{k=1}^{p}\sum_{l=1 }^n \bE_{(i_3-1)T+i_1,(k-1)n + l}1(k=j_1,l=j_3)\\
			= &\bE_{(i_3-1)T+i_1,(j_1-1)n+j_3}.
\end{align*}
Then, the $(t,l)$ element of the sum on the right-hand side  of equation \eqref{eq:tildeU} is given by:
\begin{align*}
    & \sum_{k=1}^{p} \sum_{i=1}^{Tp}\sum_{j = 1}^{np}(\be'_{k}\otimes \bI_T)_{t,i}(\bE\bP)_{i,j} 
		(\bI_n\otimes \be_{k})_{j,l}+\bU_{t,l}\\		
		& = \sum_{k=1}^{p}\sum_{i_1=1}^{T}\sum_{i_3=1}^{p}\sum_{j_3=1}^{n}\sum_{j_1=1}^{p}
		(\be'_{k}\otimes \bI_T)_{t,(i_3-1)T+i_1}\bE_{(i_3-1)T+i_1,(j_1-1)n+j_3}
		(\bI_n\otimes \be_{k})_{(j_3-1)p+j_1,l}+\bU_{t,l}\\
		& = \sum_{k=1}^{p}\sum_{i_1=1}^{T}\sum_{i_3=1}^{p}\sum_{j_3=1}^{n}\sum_{j_1=1}^{p}(\be'_{{k},i_3} \bI_{T,t,i_1})\bE_{(i_3-1)T+i_1,(j_1-1)n+j_3}(\bI_{n,j_3,l} \be_{{k},j_1})+\bU_{t,l}\\
		& = \sum_{k=1}^{p}\sum_{i_1=1}^{T}\sum_{i_3=1}^{p}\sum_{j_3=1}^{n}\sum_{j_1=1}^{p}1({k}=i_3,t=i_1)\bE_{(i_3-1)T+i_1,(j_1-1)n+j_3}1(j_3=l,k=j_1)+\bU_{t,l}\\
		& = \sum_{k=1}^{p}\bE_{(k-1)T+t,(k-1)n+l}+\bU_{t,l}\\
		& = \sum_{k=1}^{p}\sum_{s=1}^{n} \mX_{t,s,k} \mA_{l,s,k} + \bU_{t,l},
\end{align*}
where the last equality holds because of \eqref{eq:E}. Hence, we have shown that the $(t,l)$ element of the sum on the right-hand side is $\bY_{t,l}$, thus completing the proof.
\end{proof}

\begin{proof}[Proof of Proposition \ref{prop:1}] 

We begin by taking the transpose of both sides of equation $\eqref{eq:tildeU}$ in Lemma \ref{lemma:3}:
\[
	\bY'=\sum_{i=1}^{p}(\bI_n \otimes \be'_{i})\bP'\bE'(\be_{i}\otimes \bI_T)+\bU'.
\]
Next, vectorizing both sides of the above equation, we have
\begin{align}
    \tvec(\bY')&=\sum_{i=1}^{p}\left((\be'_{i}\otimes \bI_T)\otimes (\bI_n \otimes \be'_{i})\bP'\right)
		\tvec(\bE')+ \tvec(\bU') \nonumber \\		
		& = \sum_{i=1}^{p}(\bP_{i2}\otimes \bP_{i1})\tvec(\bE')+ \tvec(\bU') \nonumber \\
		& = \sum_{i=1}^{p}(\bP_{i2}\otimes \bP_{i1})\tvec(\mA_{(2)}'\mX_{(2)} )+ \tvec(\bU') \nonumber \\
		& = \sum_{i=1}^{p}(\bP_{i2}\otimes \bP_{i1})\tvec( \vTheta_{-2}\vTheta'_{2}\mX_{(2)})
    + \tvec(\bU') \label{eq:proof1} \\
		& = \sum_{i=1}^{p}(\bP_{i2}\otimes \bP_{i1})(\mX'_{(2)}\otimes \vTheta_{-2} )\tvec(\vTheta_2') 
		+ \tvec(\bU')\nonumber \\
		& = \sum_{i=1}^{p}(\bP_{i2}\mX'_{(2)}\otimes \bP_{i1}\vTheta_{-2})\vtheta_2+ \tvec(\bU'),\nonumber	
\end{align}
where $\bP_{i1}=(\bI_{n}\otimes \be'_i)\bP'$ and $\bP_{i2}=\be'_i\otimes\bI_{T}$, $\vtheta_2=\tvec(\vTheta_2')$.
The fourth equality holds because $\mA_{(2)} = \vTheta_2\vTheta_{-2}'.$
\end{proof}

\newpage

\section*{Appendix B: Data}

The quarterly dataset is sourced from the FRED-QD database \citep{MN20} maintained by the Federal Reserve Bank of St. Louis. The sample spans from 1969Q1 to 2024Q1. Table~\ref{tab:data} lists the variables and describes how they are transformed. To ensure consistency, each time series is standardized to have zero mean and unit variance.

\begin{table}[H]
\centering
\caption{A list of variables and their transformations. 
Tcode: 1: no transformation; 2: $\Delta x_t$; 5: $\Delta \log(x_t)$; 6: $\Delta^2 \log(x_t)$.}
\label{tab:data}
{\small\begin{tabular}{llc}
\hline\hline
Name & Description & T-code \\ \hline
RPI                                 & Real Personal Income                      & 5 \\
\rowcolor{lightgray}
INDPRO                              & IP Index                                  & 5 \\
GDP                                 & Real Gross Domestic Product               & 5 \\
\rowcolor{lightgray}
GDPDEFL                             & GDP deflator                              & 6 \\
DPCERA3M086SBEA                     & Real PCE                                  & 5 \\
\rowcolor{lightgray}
CMRMTSPLx                           & Real M \& T Sales                         & 5 \\
HWI                                 & Help-Wanted Index for US                  & 2 \\
\rowcolor{lightgray}
HWIURATIO                           & Help Wanted to Unemployed ratio           & 2 \\
CLF16OV                             & Civilian Labor Force                      & 5 \\
\rowcolor{lightgray}
UNRATE                              & Civilian Unemployment Rate                & 2 \\
PAYEMS                              & All Employees: Total nonfarm              & 5 \\
\rowcolor{lightgray}
CES0600000007                       & Hours: Goods-Producing                    & 5 \\
CPIAUCSL                            & CPI: All Items                            & 6 \\
\rowcolor{lightgray}
FEDFUNDS                            & Effective Federal Funds Rate              & 2 \\
TB3MS                               & 3-Month T-bill                            & 2 \\
\rowcolor{lightgray}
TB6MS                               & 6-Month T-bill                            & 2 \\
GS1                                 & 1-Year T-bond                             & 2 \\
\rowcolor{lightgray}
GS5                                 & 5-Year T-bond                             & 2 \\
GS10                                & 10-Year T-bond                            & 2 \\
\rowcolor{lightgray}
AAA                                 & Aaa Corporate Bond Yield                  & 2 \\
BAA                                 & Baa Corporate Bond Yield                  & 2 \\
\rowcolor{lightgray}
M1SL                                & M1 Money Stock                            & 5 \\
M2SL                                & M2 Money Stock                            & 5 \\
\rowcolor{lightgray}
BUSLOANS                            & Commercial and Industrial Loans           & 5 \\
NONREVSL                            & Total Nonrevolving Credit                 & 5 \\
\rowcolor{lightgray}
INVEST                              & Securities in Bank Credit                 & 5 \\
S\&P   500                          & S\&P 500                                  & 5 \\
\rowcolor{lightgray}
S\&P   div yield                    & S\&P Dividend yield                       & 2 \\
S\&P PE   ratio                     & S\&P Price/Earnings ratio                 & 5 \\
\rowcolor{lightgray}
EXSZUSx                             & Switzerland / U.S. FX Rate                & 5 \\
EXJPUSx                             & Japan / U.S. FX Rate                      & 5 \\
\rowcolor{lightgray}
EXUSUKx                             & U.S. / U.K. FX Rate                       & 5 \\
EXCAUSx                             & Canada / U.S. FX Rate                     & 5 \\
\rowcolor{lightgray}
UEMPMEAN                            & Average Duration of   Unemployment        & 5 \\
AWHMAN                              & Hours: Manufacturing                      & 1 \\
\rowcolor{lightgray}
ISRATIOx                            & Inventories to Sales Rati                 & 2 \\
REALLN                              & Real Estate Loans                         & 5 \\
\rowcolor{lightgray}
PPICMM                              & PPI: Commodities                          & 6 \\
PCEPI                               & PCE: Chain-type Price   Index             & 6 \\
\rowcolor{lightgray}
FPI                                 & Fixed Private Investment                  & 5 \\  \hline\hline
\end{tabular}}
\end{table}

\newpage

\section*{Appendix C: Additional Estimation Details}

In this appendix we provide technical details on the alternative approach of sampling each $\vtheta_2^{(r)}$ at a time for $r=1,\ldots, R$. To that end, we derive the conditional distribution of $\vtheta_2^{(r)}$ given $\{\vtheta_2^{(s)}\}_{s\neq r}$ and other model parameters.

We assume the following the marginal prior for $\vtheta_2^{(r)}$:
\[
	\vtheta_2^{(r)} \sim \distn{N}\left(\vtheta_{2,0}^{(r)}, \bV_{\vtheta_2^{(r)}}\right).
\]
Next, it follows from equation \eqref{eq:proof1} in the proof of Proposition~1 that
\begin{align*}
\tvec(\bY') & = \sum_{i=1}^{p}(\bP_{i2}\otimes \bP_{i1})\tvec( \vTheta_{-2}\vTheta'_{2}\mX_{(2)})
    + \tvec(\bU') \\
					& =\sum_{i=1}^{p}\tvec(\bP_{i1}\vTheta_{-2}\vTheta'_{2}\mX_{(2)}\bP_{i2}')+ \tvec(\bU').
\end{align*}  
Noting that $\vTheta_2\vTheta_{-2}' = \sum_{s=1}^R\vtheta_2^{(s)}\vtheta_{-2}^{(s)\prime}$, where $\vtheta_{-2}^{(s)} = \vtheta_{3}^{(s)}\otimes \vtheta_{1}^{(s)},$ one can express $\bY'$ as:
\begin{align*}
		 \bY' & = \sum_{i=1}^{p}\bP_{i1}\vTheta_{-2}\vTheta'_{2}\mX_{(2)}\bP_{i2}'+ \bU' \\
			    & = \sum_{i=1}^{p}\sum_{s=1}^R\bP_{i1}\vtheta_{-2}^{(s)}\vtheta_{2}^{(s)\prime}
					\mX_{(2)}\bP_{i2}' + \bU'.
\end{align*}

Fix $r$ and let $\bY_{2r} = \bY' - \sum_{i=1}^{p}\sum_{s\neq r}\bP_{i1}\vtheta_{-2}^{(s)}
\vtheta_{2}^{(s)\prime}\mX_{(2)}\bP_{i2}'$. Vectorize the above equation to get
\[
	\tvec(\bY_{2r}) = \sum_{i=1}^{p}(\bP_{i2}\mX_{(2)}'\otimes\bP_{i1}\vtheta_{-2}^{(r)})
	\vtheta_{2}^{(r)}+\tvec(\bU').
\]

Therefore, the posterior conditional distribution $(\vtheta_2^{(r)} \gvn \mathbf{Y}, \vtheta_1, \{\vtheta_{2}^{(s)}\}_{s\neq r}, \vtheta_{3}, \vSigma)$ can be expressed as
\[
	(\vtheta_2^{(r)} \gvn \mathbf{Y}, \vtheta_1, \{\vtheta_{2}^{(s)}\}_{s\neq r}, 
	\vtheta_{3}, \vSigma)\sim\distn{N}\left(\hat{\vtheta}_2^{(r)}, \bK_{\vtheta_2^{(r)}}^{-1}\right), 
\]
where
\begin{align*}
	\bK_{\vtheta_2^{(r)}} & = \bV_{\vtheta_2^{(r)}}^{-1} 	+ \sum_{i=1}^{p}\sum_{j=1}^{p}
	(\mX_{(2)}\bP'_{i2}\otimes\vtheta^{(r)\prime}_{-2}\bP'_{i1})\vSigma^{-1}
	(\bP_{j2}\mX'_{(2)}\otimes \bP_{j1}\vtheta^{(r)}_{-2}) \\
	\hat{\vtheta}_2^{(r)} & = \bK_{\vtheta_2^{(r)}}^{-1}\left(\bV_{\vtheta_2^{(r)}}^{-1}\vtheta_{2,0}^{(r)} 
	+ \sum_{i=1}^{p}(\mX_{(2)}\bP'_{i2}\otimes \vtheta^{(r)\prime}_{-2}\bP'_{i1})
	\vSigma^{-1}\tvec(\bY_{2r})\right).
\end{align*}

\newpage

\singlespace

\ifx\undefined\BySame
\newcommand{\BySame}{\leavevmode\rule[.5ex]{3em}{.5pt}\ }
\fi
\ifx\undefined\textsc
\newcommand{\textsc}[1]{{\sc #1}}
\newcommand{\emph}[1]{{\em #1\/}}
\let\tmpsmall\small
\renewcommand{\small}{\tmpsmall\sc}
\fi

\end{document}